\newtheorem{theorem}{Theorem}
\newtheorem{lemma}{Lemma}
\newtheorem{proposition}{Proposition}
\newtheorem{assumption}{Assumption}
\newtheorem{remark}{Remark}
\newenvironment{proof}{\noindent\textbf{Proof.}}{\hfill$\square$\par}
\newcommand{\cL}{\mathcal{L}}
\newcommand{\cQ}{\mathcal{Q}}
\newcommand{\N}{\mathbb{N}}
\newcommand{\R}{\mathbb{R}}
\begin{document}

\begin{acronym}
    \acro{RoA}{Region of Attraction}
    \acro{FH-RoA}{Finite Horizon Region of Attraction}
    \acro{SoS}{Sum of Squares}
    \acro{LMI}{Linear Matrix Inequality}
    \acro{SDP}{Semi Definite Programming}
    \acro{LP}{Linear Program}
\end{acronym}

\title{\bf Convex computation of regions of attraction from data using Sums-of-Squares programming}

\author{Oumayma Khattabi, Matteo Tacchi-Bénard, Sorin Olaru}

\maketitle

\begin{abstract}
This paper focuses on the analysis of the \ac{RoA} for unknown autonomous dynamical systems. A data-driven approach based on the moment-\ac{SoS} hierarchy is proposed, enabling novel \ac{RoA} outer approximations despite the reduced information on the dynamics. The main contribution consists in bypassing the system model and, hence, the recurring constraint on its polynomial structure. Numerical experiments showcase the influence of data on learned approximating sets, highlighting the potential of this method.

\textbf{Keywords:} \textit{Data-driven, Region of attraction, Moment-SoS, Semi-definite programming.}
\end{abstract}

\section{Introduction}

Stability analysis of dynamical systems is one of the fundamental pillars of control theory. It is traditionally based on the system model, leveraging approaches such as Lyapunov's direct method~\cite{lyapunovGeneralProblemStability1992}, LaSalle's invariance principle~\cite{la2012stability}, or the comparison principle in dissipativity theory~\cite{george2025comparison}. While relying on a system model gives strong theoretical guarantees, defining a precise model is becoming harder with the growing complexity of modern systems. In recent years, data-driven methods increased in popularity due to the abundance of data, increased computational power, and the need to adapt to nonlinear real systems. As a result, many stability analysis approaches have been developed or adapted to prove stability directly from data.

Model-based and data-driven techniques have contributed significantly to the stability analysis of nonlinear systems. In both settings, several works stand out. For instance, ~\cite{julian-parametrization-1999} uses linear programming to learn piecewise affine Lyapunov functions   for uncertain systems; while~\cite{TacchiTAC25} uses second-order cone programming to approximate such functions from data. A related data-based approach is the continuous piecewise affine method proposed in~\cite{hafstein2016computing}.

Neural networks have also become practical tools in stability analysis. In~\cite{kolter2019learning}, neural Lyapunov candidates are identified jointly with the system model from data to establish stability. They are further employed in~\cite{min2023data} to generate control laws constrained by Lyapunov functions learned online, in~\cite{grune2021overcoming} while mitigating the curse of dimensionality in Lyapunov function identification, and in~\cite{kim2024estimation} to compute constraint-admissible invariant sets.

Another approach is the use of polynomials: in~\cite{martin-data-driven-2024}, dissipativity is analysed from noisy data using piecewise Taylor approximations. On the model-based side, \ac{SoS} programming has been used to find control Lyapunov functions in~\cite{tan2004searching}, estimate the \ac{RoA} via polynomial Lyapunov functions in~\cite{biswas2023region}, and compute inner and outer \ac{FH-RoA} approximations in~\cite{KordaTAC14, KordaNOLCOS13}.

While \ac{SoS}-based methods provide strong theoretical guarantees for stability analysis, they are typically limited to model-based polynomial systems. In contrast, data-driven approaches enable the study of more general nonlinear dynamics. Some works use data to approach the unknown part of the model, and \ac{SoS} methods to learn Lyapunov candidates~\cite{colbert2018using} or controllers with guarantees~\cite{han2022sum}. This work, in contrast, examines the feasibility and effectiveness of a fully data-driven approach under minimal model information.

The remainder of the paper is organised as follows. In section~\ref{sec:data-driven}, the theoretical framework of the data-driven concept is provided, as well as a trivial example in $\R$ to illustrate. Section~\ref{sec:opt} details the optimisation problem based on data and its complementary case of inner approximations. Numerical results are presented in section~\ref{sec:num}, along with an analysis of the computational difficulty and a proposed solution. An appendix is provided at the end of the paper to explain the moment-SoS hierarchy and prove the convergence of the optimisation problem.

\noindent{\bf Notation:} Let $n \in \N$. $\|.\|$ is the Euclidean norm.
For $X\subset\R^n$, $\partial X$ is the boundary of $X$, $X^c$ its complement and $\mathrm{vol(X)}$, its $n$-dimensional volume. We define $[n] \triangleq \{1,\ldots,n\}$. 
$\R[x]^{n}$ is the set of $n$-dimensional vectors of real polynomials in $x$. $C^n(X)$ is the set of functions on $X$ with continuous derivatives up to order $n$. $W^{n,\infty}(X)$ is the set of functions with bounded derivatives up to order $n$ on $X$.

\section{Data-driven region of attraction}\label{sec:data-driven}

\subsection{Problem statement}

We consider a dynamical system:
\begin{equation}\label{f0}
    \dot x=f(x)
\end{equation}
defined by an unknown vector field $f:\mathbb{R}^n\rightarrow \mathbb{R}^n$. {We seek the set $X_0(f)$ of initial states whose trajectories stay in $X \subset \mathbb{R}^n$ over $[0,T]$, and reach $X_T \subset X$ at time $T>0$. We call $X_0(f)$ the \ac{FH-RoA}, defined for a horizon $T$, an admissible set $X$, and a target set $X_T$.} To underline its dependence on the original model (1), the set $X_0$ will be denoted as such:
\begin{equation} \label{eq:modelROA}
    X_0(f) \triangleq \left\{ x_0 \in X \middle| \begin{array}{l}
        \exists x(\cdot) \in C^1([0,T])^n \quad \text{s.t.} \\[0pt]
        x(0) = x_0, \quad x(T) \in X_T, \\[0pt] 
        \forall t \in [0,T], \quad x(t) \in X \\[0pt]
        \text{and} \quad \dot{x}(t) = f(x(t))
    \end{array} \right\}.
\end{equation}

Although unknown, $f$ is accessible as follows:
\begin{assumption} \label{asm: data}
    The function $f$ is Lipschitz continuous, and the following information is available:
    \begin{itemize}
        \item an upper bound $M>0$ on its Lipschitz constant, 
        \item a finite and fixed sample of {noise-free} evaluation points:   $D = \{(x_i,y_i = f(x_i))\}_{i\in [N]} \subset \R^{2n}. $
    \end{itemize}
\end{assumption}

\begin{remark}
    Bounding arguments are common in data-driven analysis: partial derivative bounds {estimated from data}~\cite{martin-data-driven-2024,makdesi2021efficient}, Lipschitz bounds \cite{TacchiTAC25}, Hilbert norm bounds~\cite{maddalena2021deterministic}, etc.
\end{remark}

\begin{remark} \label{rem: uncertainty}
    Assumption~\ref{asm: data} implies that $\forall x \in X$, $f(x)$ lies within a state-dependent semi-algebraic uncertainty set:
    \begin{align}
        F_{D}(x) 
        & = \left\{y \in \R^n \; \middle | \begin{array}{l}
            \forall i \in [N], \\[0pt]
            \|y-y_i\|^2 \leq M^2 \|x - x_i\|^2 
        \end{array}\right\}. \label{eq:uncertainty}
    \end{align}
\end{remark}

{\begin{remark}
    In this work, the data are assumed noise-free. Noisy datasets can still be incorporated by assuming $y_i = f(x_i) + z_i$, with $\|z_i\|\leq\delta$ a bounded disturbance. This simply enlarges the uncertainty set by the disturbance bound $\delta$: $\|z_i\|^2 \leq \delta^2 \Rightarrow \|y-y_i-z_i\|^2 \leq M^2\|x-x_i\|^2$.
\end{remark}}

\begin{remark} \label{rem: difficulty}
    Even when $f$ is known, and $X$ and $X_T$ are simple sets, exact computation of $X_0(f)$ is rarely tractable. For instance, if $f \in \R[x]^n$, and $X$ and $X_T$ are semi-algebraic, the moment-SoS hierarchy enables the computation of certified inner (\cite{KordaNOLCOS13}) and outer (\cite{KordaTAC14}) approximations of $X_0(f)$ via convex optimisation. The present contribution extends these works to the setting where $f$ can only be accessed through the dataset $D$, and the Lipschitz bound $M$.
\end{remark}

To compute certified approximations of $X_0(f)$, we intend to bridge the gap between model-based moment-SoS frameworks in~\cite{KordaNOLCOS13, KordaTAC14}, and the data-based approach in~\cite{TacchiTAC25}. As a result, we make the following assumption on the structure of the admissible and target sets $X$ and $X_T$:

\begin{assumption} \label{asm: algebra}
    $X$ and $X_T$ are compact and there exist {$n_X, n_T \in \N$}, vectors $g_X \in \R[x]^{n_X}$ and $g_T \in \R[x]^{n_T}$, 
    s.t.:
    \begin{subequations}
        \begin{equation}
            X = \{x \in \R^n \mid g_X(x) \geq 0\}
        \end{equation}
        \begin{equation}
            X_T = \{x \in \R^n \mid g_T(x) \geq 0\}
        \end{equation}
    \end{subequations}
    where vector inequalities are considered component-wise.
\end{assumption}

\subsection{Finite time region of attraction}

We {revisit} the formulation proposed in~\cite{KordaTAC14} for outer \ac{FH-RoA} approximation. The authors 
{over-approximate} the horizon $T$ \ac{FH-RoA} of the target $X_T$ 
for {a (model-based) polynomial control system with polynomial input constraints, seen as a differential inclusion. The resulting \ac{FH-RoA} is defined identically to~\eqref{eq:modelROA}, with the additional requirement that there \textit{exists} a control law generating the converging trajectory.} 
%
In this contribution, we work with uncontrolled but unknown dynamics. The key observation is that~\cite{KordaTAC14} does not require the input $u$ to be a control {input}: the methodology applies if $u$ is replaced with an unknown disturbance $y$, and our framework corresponds to 
{$\dot{x} = y$} with $y \in F_D(x)$ as described in~\eqref{eq:uncertainty}. Hence, one can formulate the model uncertainty as a fixed implicit semi-algebraic set inclusion:
\begin{equation} \label{eq:implicit}
    \hspace{-0.5em}\begin{array}{l}
        (x,\dot{x}) \in \Gamma_D \qquad \text{with} \\[0pt]
        \Gamma_D \triangleq \left\{(x,y) \in \R^{2n} \middle| \begin{array}{l}
        g_X(x) \geq 0 \; \text{and} \; y \in F_D(x)
        \end{array} \right\}.
    \end{array}
\end{equation}

This way, the data-based \ac{FH-RoA} can be expressed as follows:
\begin{equation} \label{eq:dataROA}
    X_0 \triangleq \left\{x_0 \in X \middle| \begin{array}{l}
        \exists x(\cdot) \in W^{1,\infty}([0,T])^n \ \ \text{s.t.} \\[0pt]
        x(0) = x_0, \ x(T) \in X_T \ \ \& \\[0pt]
        \forall t \in [0,T], \ (x(t),\dot{x}(t)) \in \Gamma_D
    \end{array}\right\}.
\end{equation}
In words, $X_0$ is the set of all initial conditions of trajectories $x(\cdot)$, such that $x(T) \in X_T$ and at all times $t \in [0,T]$, $x(t) \in X$, and $\dot{x}(t) \in F_D(x(t))$. 

\begin{remark} \label{rem:best}
    The description~\eqref{eq:dataROA} of $X_0$ requires \textit{the existence of at least one} dynamical system that maps $x(0)=x_0\in X_0$ to $x(T)\in X_T$, subject to state constraints and compliant with the data. We refer to this case as the \textit{best-case} \ac{FH-RoA}, which provides an outer approximation of the \ac{FH-RoA} of the real system. In contrast, the worst-case \ac{FH-RoA} denotes the set $X_0^\star$ corresponding to \textit{all} dynamical systems compliant with the data that map $x(0)=x_0\in X_0^\star$ to $x(T)\in X_T$. This yields an inner approximation of the \ac{FH-RoA} of the real system. 
    We show that the worst-case inner approximation can be obtained from the best-case outer approximation of $X_0^c$, the complementary set of $X_0$.
\end{remark}

\begin{remark}
    From a different perspective, the trajectories initiated in $X_0$ are \emph{viable} (\cite{aubin2011viability}) over the horizon $T$, with target set $X_T$, according to the uncertainty that can be inferred from the available data.
\end{remark}

\begin{remark}
    It is well known that a control input and an unknown disturbance are mathematically equivalent; the only difference is that the control value can be chosen. The interplay between the existence of a control steering the system to the target and the requirement of reaching the target under all possible perturbations is standard in $H^\infty$ control. The originality of our approach lies in applying this observation to the notions of inner and outer approximations of the \ac{FH-RoA} in a data-driven framework.
\end{remark}

{\vspace{-1.5em}\begin{remark}
    Our approach differs from~\cite{KordaTAC14} in four aspects: $(i)$ it does not require the underlying dynamics $f$ to be polynomial, they can be any Lipschitz continuous vector field; $(ii)$ it requires no model at all, and is instead fully data-driven; $(iii)$ it accounts for the more general \textit{coupled} constraint $y \in F_D(x)$ instead of the \textit{decoupled} $x \in X, u \in U$; $(iv)$ it can provide either \textit{best-case} outer approximations or \textit{worst-case} inner approximations.
\end{remark}}

\subsection{Illustration on a toy example}

The concept is explained using the following 1D system defined for $x \in X=[-1,1]\subset \R$ by:
\begin{equation}\label{eq:1D_example_sys}
    \dot{x} = 
        \begin{cases}
        2x(x^2-0.5^2) \quad \text{if} \quad |x| \leq 0.5 \\[0pt]
        x - 0.5 \quad \text{if} \quad x \geq 0.5 \\[0pt]
        x + 0.5 \quad \text{if} \quad x \leq -0.5
        \end{cases}
\end{equation}

 For illustration purposes, the data is generated by the model $f$ given in~\eqref{eq:1D_example_sys} that is assumed unknown, and we show how the dynamics of~\eqref{eq:1D_example_sys} can be analysed with only data. Using the following dataset with three points: $D_1 = \{(-1;-0.5),(0;0),(1;0.5)\}$ and the Lipschitz constant $M=1$ of the system, we plot the area $F_{D_1}(X)$ where all the possible function values exist (see green region in figure~\ref{fig:1D-3p-example}). The largest possible \ac{FH-RoA} corresponds to the piecewise affine function constructed out of the Lipschitz inequality limits with values in quadrants II and IV. We call this function $f_{best}$, associated with the best-case RoA.
 \begin{figure}[htbp]
     \centering
     \includegraphics[width=0.67\linewidth]{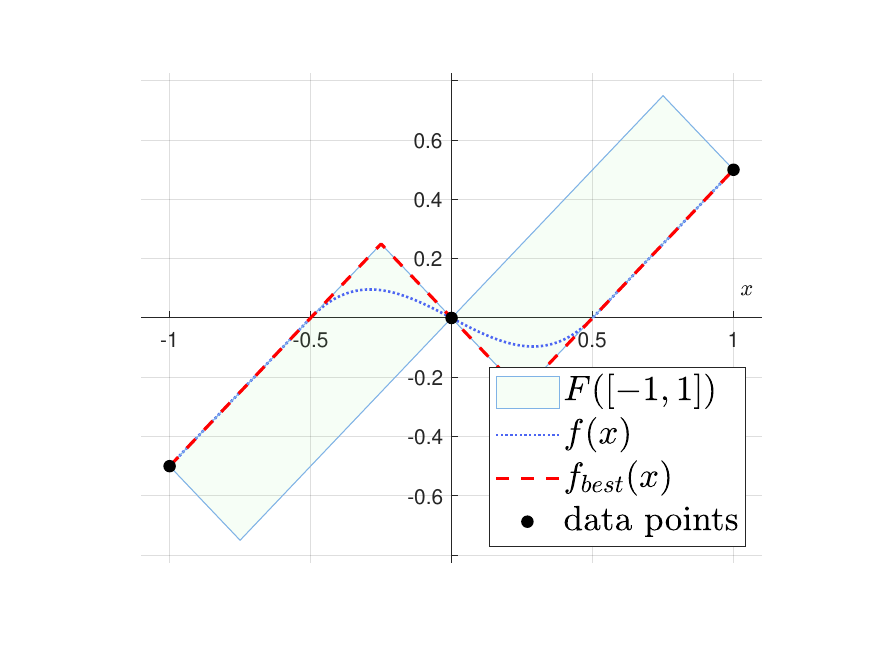}
     \caption{Toy example with 3 data points}
     \label{fig:1D-3p-example}
 \end{figure}

\vspace{-1em}
{Given the same dataset $D_1$ and $X_T = [-0.25,0.25]$ for $T=1s$, one can analytically compute $f_{best}$ as in figure~\ref{fig:1D-3p-func}, as well as its corresponding \ac{FH-RoA}. In this case, the \ac{FH-RoA} of system~\eqref{eq:1D_example_sys} for $T$ and $X_T$ is $X_0(f) = [-0.34,0.34]$, and the \ac{FH-RoA} of $f_{best}$ for the same parameters is $X_0(f_{best}) = [-0.408,0.408]$ (see figure~\ref{fig:1D-3p-RoA}), making it an outer approximation of $X_{0}(f)$.}

\begin{figure}[htbp]
    \centering
    \begin{subfigure}{.49\linewidth}
        \includegraphics[width=\linewidth]{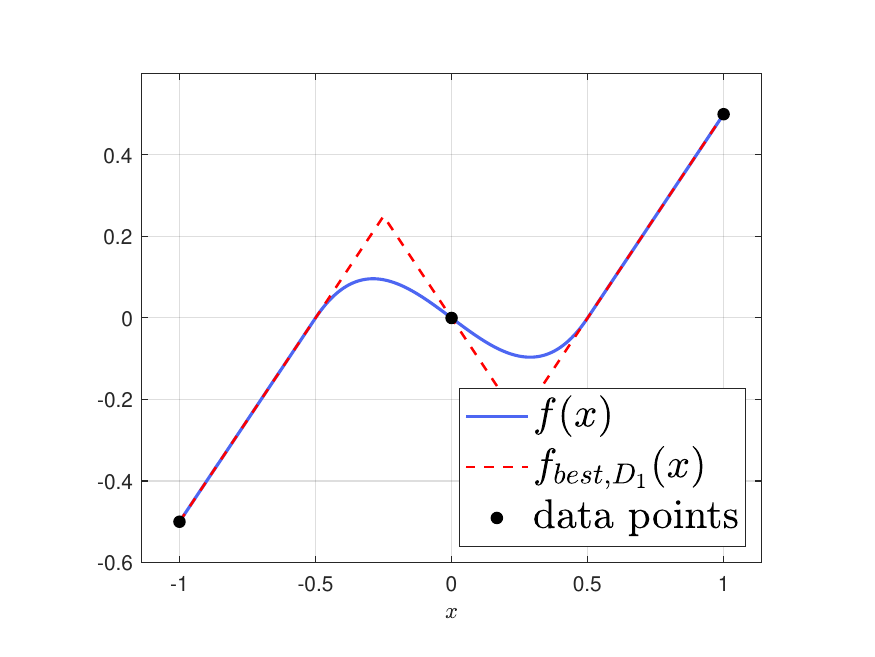}
        \caption{dataset $D_1$}
        \label{fig:1D-3p-func}
    \end{subfigure}
    \begin{subfigure}{.49\linewidth}
        \includegraphics[width=\linewidth]{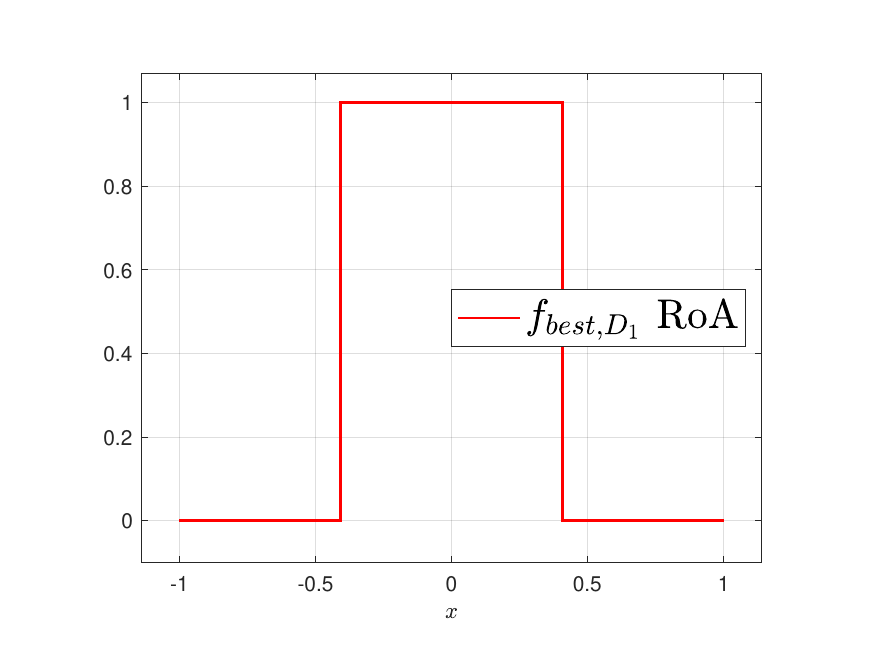}
        \caption{indicator function of the \ac{FH-RoA}}
        \label{fig:1D-3p-RoA}
    \end{subfigure}
    \caption{Identification of the \ac{FH-RoA} with 3 data points}
    \label{fig:1D-3p}
\end{figure}

Adding 2 points $D_2 = D_1 \cup \{(-0.3;0.096),(0.3;-0.096)\}$ imposes tighter constraints (see figure~\ref{fig:1D-5p-func}), reducing the feasible solution space (see figure~\ref{fig:1D-5p-RoA}). However, such an analytical approach is cumbersome even for scalar dynamics.

\begin{figure}[htbp]
    \centering
    \begin{subfigure}{.49\linewidth}
        \includegraphics[width=\linewidth]{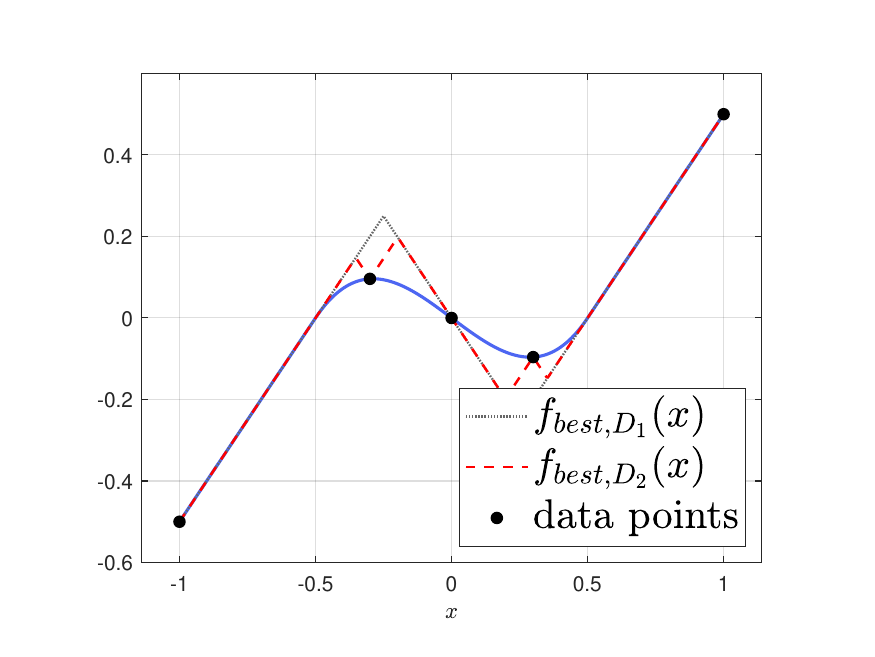}
        \caption{dataset $D_2$}
        \label{fig:1D-5p-func}
    \end{subfigure}
    \begin{subfigure}{.49\linewidth}
        \includegraphics[width=\linewidth]{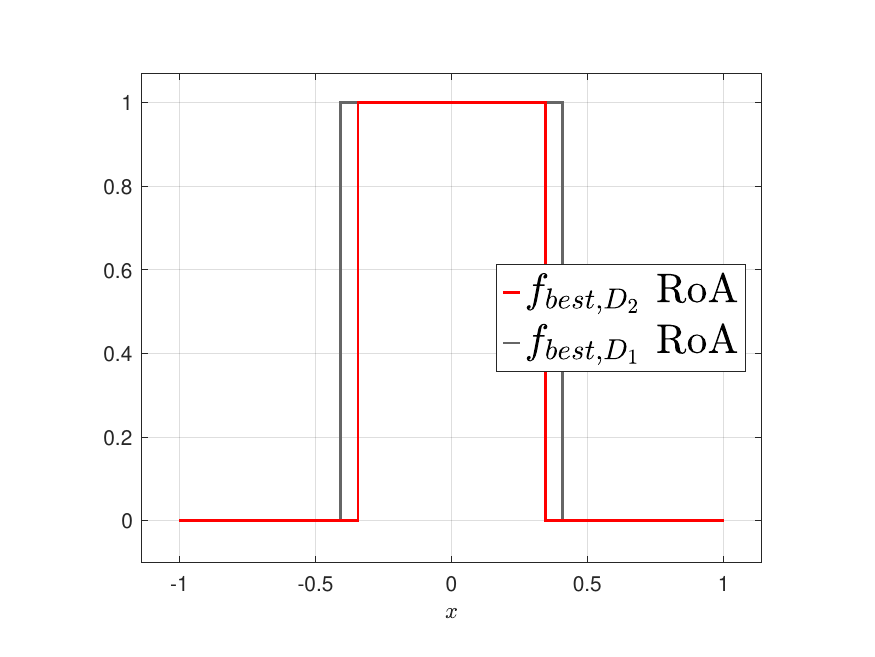}
        \caption{indicator function of the \ac{FH-RoA}}
        \label{fig:1D-5p-RoA}
    \end{subfigure}
    \caption{Identification of the \ac{FH-RoA} with 5 data points}
    \label{fig:1D-5p}
\end{figure}

\begin{remark} \label{rem:lipschitz}
    The description of $X_0$ in~\eqref{eq:dataROA} implicitly allows non-Lipschitz behaviour in the dynamics $\dot{x}$ as a function of $x$, requiring the Lipschitz condition to hold only on the data points (formally, $f_{best}$ in Fig.~\ref{fig:1D-3p-example} is allowed to jump within the green envelope). However, the toy example suggests that the $f_{best}$, whose \ac{FH-RoA} corresponds to the best-case \ac{FH-RoA}, always lies on the boundary of the uncertainty set $F_D(x)$, and is exactly $M$-Lipschitz continuous.
\end{remark}


\section{The optimisation framework}\label{sec:opt}

\subsection{Linear program on functions}

{We propose a generalisation of}~\cite{KordaTAC14} from explicit semi-algebraic differential inclusions (and polynomial control systems) to implicit semi-algebraic set inclusions (and systems with bounded uncertainty).

Let $\cL_y$ denote the operator acting on $v\in C^1([0,T]\times X)$, defined for $(t,x,y) \in [0,T]\times \Gamma_D$ by:

\begin{equation} \label{eq:Lie}
    \cL_y v(t,x,y) \, \triangleq
    \frac{\partial v}{\partial t}(t,x) + y^\top\frac{\partial v}{\partial x}(t,x)
\end{equation}
The \ac{LP} reads as follows:
\begin{subequations} \label{eq:LP}
\begin{align}
    \hspace{-1em} \underset{\substack{v \in C^1([0,T]\times X) \\[0pt] w \in C^0(X)}}{\mathrm{minimise}} & \int_X w(x) \; dx \\[0pt]
    \text{s.t.} \qquad & \cL_y v(t,x,y) \leq 0 & \hspace{-2em} \forall (t,x,y) \in [0,T]\times \Gamma_D \label{con: occupation} \\[0pt]
    & w(x) \geq 0 & \forall x \in X \label{con:wgeq} \\[0pt]
    & w(x) \geq v(0,x) + 1 & \forall x \in X \label{con:vw}\\[0pt]
    & v(T,x) \geq 0 & \forall x \in X_T \label{con:terminal}
\end{align}
\end{subequations}

\begin{remark} \label{rem: difference}
    The difference with~\cite{KordaTAC14} is that~\eqref{con: occupation} is defined on $[0,T]\times\Gamma_D$ instead of $[0,T]\times X\times U$ and controlled dynamics $\dot{x} = f(t,x,u)$ are replaced with unknown speed of variation $\dot{x} = y \in F_D(x)$.
\end{remark}

{\begin{lemma} \label{lem:cv}
    Let $(v,w)$ be feasible for problem~\eqref{eq:LP} and define 
    $$
        \hat{X}_0(w) \triangleq \{x \in X \mid w(x) \geq 1\}.
    $$
    Recalling the definition~\eqref{eq:dataROA} of $X_0$, it holds
        $X_0 \subset \hat{X}_0(w)$. 
    Moreover, considering a minimising sequence $(v_\epsilon, w_\epsilon)$ s.t.
    $$0 \leq \int w_\epsilon(x) \; dx - w^\star \leq \epsilon,$$
    where $w^\star$ is the optimal value of problem~\eqref{eq:LP}, it holds
    \begin{equation} \label{eq:cv}
    \mathrm{vol}(\hat{X}_0(w_\epsilon) \setminus X_0) \underset{\epsilon \to 0}{\longrightarrow} 0.
    \end{equation}
\end{lemma}
\begin{proof}
    Almost identical to the ones in ~\cite{KordaTAC14, KordaNOLCOS13, KordaSIOPT14, OustryLCSS19}: Let $x_0 \in X_0$ and consider an admissible trajectory $x(t)$ such that $x(0) = x_0$, $x(T) \in X_T$ and at all times $(x(t),\dot{x}(t)) \in \Gamma_D$. Such a trajectory exists by definition of $X_0 \ni x_0$. Now, consider the map $t \mapsto v(t,x(t))$; by design~\eqref{eq:Lie} of the operator $\cL_y$, constraint~\eqref{con: occupation} ensures that
    \begin{equation}
        0 \geq \cL_y v(t,x(t),\dot{x}(t)) = \left.\frac{d}{dt}\middle[ v(t,x(t)) \right] \label{eq:decrease} \tag{*}
    \end{equation}
    $$ \text{so that} \quad w(x_0) \stackrel{\eqref{con:vw}}{\geq} 1 + v(0,x_0) \stackrel{\eqref{eq:decrease}}{\geq} 1 + v(T,x(T)) \stackrel{\eqref{con:terminal}}{\geq} 1, $$
    i.e. $x_0 \in \hat{X}_0(w)$. Eventually,~\eqref{eq:cv} follows from  $$ \!\!\! \mathrm{vol}(\hat{X}_0(w) \! \setminus \! X_0) \! = \! \mathrm{vol}(\hat{X}_0(w)) \! - \! \mathrm{vol}(X_0) \! \leq \!\! \int \!\! w(x) dx \! - \! w^\star \vspace*{-2em}$$
\end{proof}}

In practice, these converging approximations are computed \emph{via} the \ac{SoS} hierarchy{: certificates $(v,w)$ are parame-trised as $v(t,x) = c_v^\top \varphi(t,x)$ and $w(x) = c_w^\top \psi(x)$, where $\varphi(t,x)$ and $\psi(x)$ are polynomial vectors, and $c_v,c_w$ are real vectors; inequalities are parametrised by \ac{LMI}, for instance constraint~\eqref{con:wgeq} reads $w(x) = \psi(x)^\top Q_1 \psi(x) + g_X(x) \psi(x)^\top Q_g \psi(x)$, $Q_1, Q_g \succeq 0$. See Appendix~\ref{app:sos} for details on the SoS hierarchy.}

\subsection{Toy example revisited}
{The numerical example introduced in~\eqref{eq:1D_example_sys} is reused with dataset $D_1$. In figure~\ref{fig:1D-3p-LP}, we compare the outer approximation $\hat{X}_0(w)$ (in blue, obtained from \ac{LP}~\eqref{eq:LP}) with $X_0(f_{best,D_1})$ (in red
).} This provides a good approximation of the real \ac{FH-RoA}, as 
$w(x)$ closely mimics its indicator function and reaches $w(x)=1$ along the set's boundary.

\begin{figure}[htbp]
    \centering
    \includegraphics[width=0.67\linewidth]{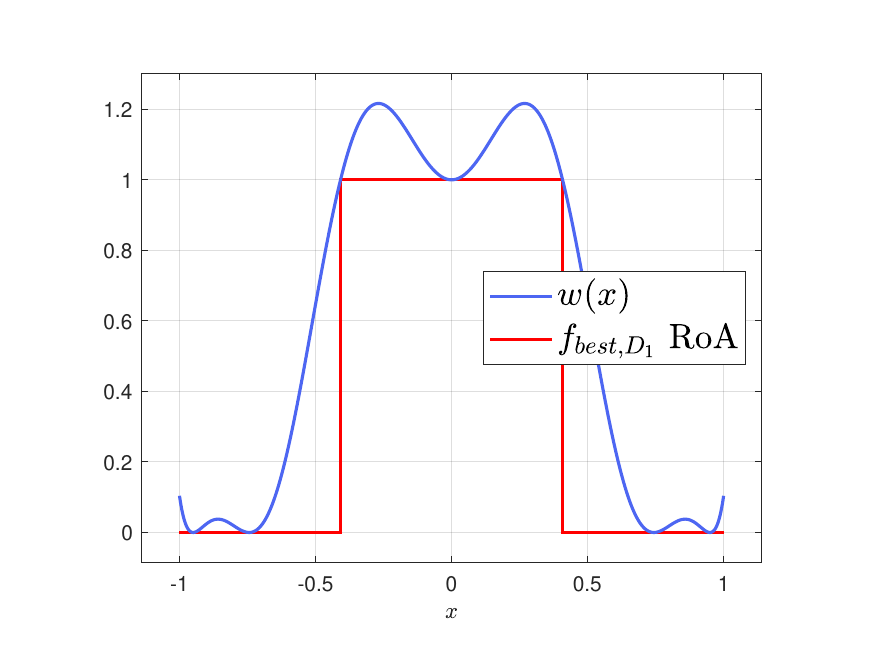}
    \caption{\ac{LP} based \ac{FH-RoA} approximation with 3 data points.}
    \label{fig:1D-3p-LP}
\end{figure}

Applying the same optimisation procedure with dataset $D_2$ produces a narrower outer approximation $\hat{X}_0(w)$ of the \ac{FH-RoA} $X_0(f)$ by approaching the new $X_0(f_{best,D_2})$ (indicator function in red), as illustrated in figure~\ref{fig:1D-5p-LP}. This highlights how the available data affects the accuracy of the approximation, a point further investigated in section~\ref{sec:num}.

\begin{figure}[htbp]
    \centering
    \includegraphics[width=0.67\linewidth]{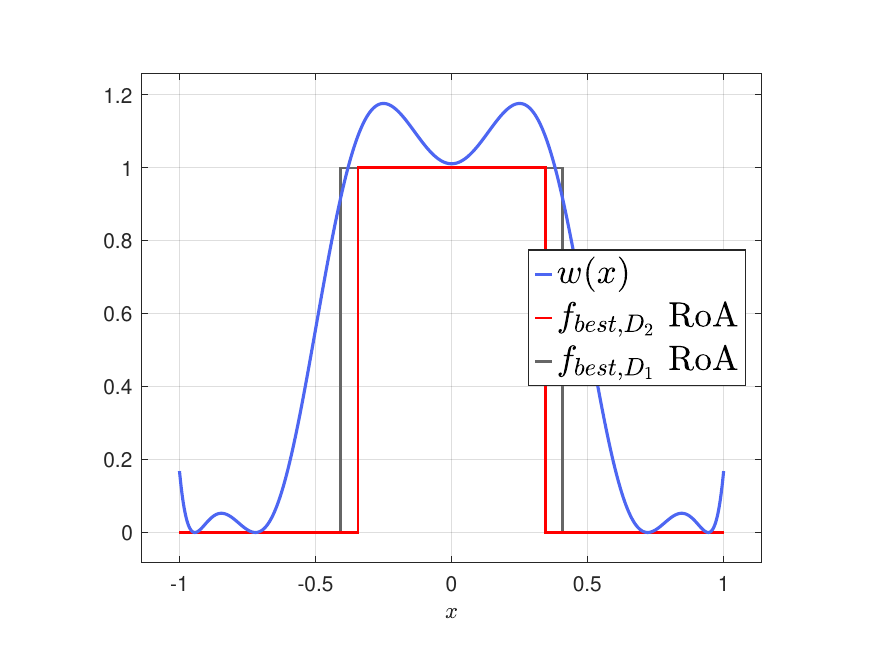}
    \caption{\ac{LP} based \ac{FH-RoA} approximation with 5 data points}
    \label{fig:1D-5p-LP}
\end{figure}

\subsection{Worst-case inner approximations}

As highlighted in Remark~\ref{rem:best}, the above framework concerns the best-case RoA outer approximation. 
The \emph{worst-case \ac{FH-RoA} inner approximation} is relevant for data-driven analysis, \emph{i.e.}, requiring that \emph{all} dynamical systems consistent with the data steer $x_0$ to $X_T$ under state constraints. Below, we provide a procedure to transition from the former to the latter, in the spirit of~\cite{KordaNOLCOS13}; though that reference considers closed-loop systems, whereas we retain the uncertainty set inclusion $\dot{x} \in F_D(x)$.

Consider the \ac{FH-RoA} $\tilde{X}_0$ obtained by replacing $X_T$ with $X_T^c$ in~\eqref{eq:dataROA}, as well as, for any $\tau \in (0,T)$, the \ac{FH-RoA} $\tilde{X}_\tau^{\partial}$ obtained by replacing $T$ with $\tau$ and $X_T$ with $\partial X$ in~\eqref{eq:dataROA}. Then, one can straightforwardly compute outer approximations of $\tilde{X}_0$ and $\tilde{X}_\tau^\partial$ doing the same replacements in~\eqref{eq:LP}. Moreover, defining the \emph{worst-case} \ac{FH-RoA}:
\begin{subequations}
\begin{equation} \label{eq:worst}
    X_0^\star \triangleq \left\{x_0 \in X \middle| \begin{array}{l}
        \forall x(\cdot) \in W^{1,\infty}(0,T)^n \quad \text{s.t.} \\[0pt]
        \forall t \in [0,T], \ \dot{x}(t) \in F_D(x(t)) \\[0pt]
        \text{and} \quad x(0) = x_0, \quad \text{it holds} \\[0pt]
        \forall t \in [0,T], \ x(t) \in X \\[0pt]
        \text{and} \quad x(T) \in X_T 
    \end{array}\right\},
\end{equation}
one gets the following complement formula:

\vspace{-1em}
\begin{align} 
(X_0^\star)^c & = \left\{ x_0 \in X \; \middle| \begin{array}{l}
        \exists x(\cdot) \in W^{1,\infty}(0,T)^n \quad \text{s.t.} \\[0pt]
        \forall t \in [0,T], \quad \dot{x}(t) \in F_D(x(t)), \\[0pt]
        \quad x(0) = x_0 \quad \text{and} \\[0pt]
        \left[\begin{array}{l}
            \exists \tau \in (0,T); \; x(\tau) \in \partial X \\[0pt]
            \qquad \text{or} \quad x(T) \in X_T^c
        \end{array} \right]
    \end{array} \right\} \notag \\[0pt] 
    & = \bigcup_{0<\tau<T}^{\vphantom{m}} \tilde{X}_\tau^\partial \cup \tilde{X}_0,
\label{eq:compl}  
\end{align}
\end{subequations}
so that outer approximations of $(X_0^\star)^c$ can be deduced from a minimising sequence for the following problem:

\vspace{-1em}
\begin{subequations} \label{eq:innerLP}
\begin{align}
    \hspace{-1em} \underset{\substack{v \in C^1([0,T]\times X) \\[0pt] w \in C^0(X)}}{\mathrm{minimise}} & \int w(x) \; dx \\[0pt]
    \text{s.t.} \qquad & \cL_y v(t,x,y) \leq 0 & \hspace{-2em} \forall (t,x,y) \in [0,T]\times \Gamma_D \label{con:occupin} \\[0pt]
    & w(x) \geq 0 & \forall x \in X \label{con:wpos} \\[0pt]
    & w(x) \geq v(0,x) + 1 & \forall x \in X \\[0pt]
    & v(T,x) \geq 0 & \forall x \in {
    X \setminus X_T} \\[0pt]
    & 
    v(t,x) \geq 0 & 
    \hspace{-2em} \forall (t,x) \in [0,T]\times \partial X \label{con:boundary}
\end{align}
\end{subequations}

Since the complements of outer approximations of $(X_0^\star)^c$ are inner approximations of $X_0^\star$, and similarly to 
Lemma~\ref{lem:cv}, any $(v,w)$ feasible for~\eqref{eq:innerLP} is such that:
\begin{align} \label{eq:certin}
    X_0^\star 
    & {
    \; \supset \; } \{x \in X \mid w(x) {
    \; \leq \; } 1\} \triangleq
    \check{X}_0^\star,
\end{align}
then the moment-SoS hierarchy can be implemented to compute such inner approximations $\check{X}_0^\star$.

{\begin{theorem} \label{thm:convergence}
    For a given dataset $D = \{(x_i, f(x_i)\} \subset \R^{2n}$, let $X_0^\star(D)$ be the corresponding worst-case data-based \ac{FH-RoA} as per~\eqref{eq:worst}. Then, for all $N \in \N$ there exists a dataset $D_N = \{(x_i,f(x_i))\}_{i=1}^N \subset \R^{2n}$ such that:
    \begin{subequations}
        \begin{align}
            & \forall N \geq 1, \quad X_0^\star(D_N) \subset X_0(f) \label{eq:best-case} \\[0pt]
            \text{and} \quad & \mathrm{vol}\left(X_0(f) \setminus X_0^\star(D_N) \right) \underset{N\to\infty}{\longrightarrow} 0. \label{eq:dataconv}
        \end{align}
    \end{subequations}
\end{theorem}
\begin{proof}
    See appendix~\ref{app:proof}.
\end{proof}}

{
\begin{remark}
    The difference between this framework and the method proposed in~\cite{TacchiTAC25} is twofold: here, the optimal solutions are proved to converge towards the actual \ac{FH-RoA}, while~\cite{TacchiTAC25} only guarantees convergence of certificates towards a Lyapunov function; plus, this framework uses polynomial certificates and \ac{SoS} programming while the latter is based on piecewise affine Lyapunov functions and second order cone programming. While the uncertainty modeling is close, the methods and the guarantees they yield are completely different.
\end{remark}
}

\section{Numerical results}\label{sec:num}

All the simulations were conducted on a laptop with 32GB of memory and an AMD Ryzen 7 7735U processor. The algorithm was coded using YALMIP~\cite{yalmip}, with MOSEK~\cite{mosek} being the selected solver.

\subsection{Analysis of data position's influence}
In this part, the influence of the position of data points on the outer approximation (\ac{LP}~\eqref{eq:LP}) of the \ac{FH-RoA} is explored. For this, we work with the same 1D system defined in~\eqref{eq:1D_example_sys} on $X=[-1,1]$, and we keep $X_T = [-0.25,0.25]$ for $T=1s$. {We consider the dataset $D(p) = D_1 \cup \{(-p;f(-p)),(p;f(p))\}$ of 5 data points, of which three are fixed (in $D_1$) and two vary in position according to $p\in [0.1,0.5]$}. In Figure~\ref{fig:1D-variation}, and in light of the uncertainty set representation of Figure~\ref {fig:1D-3p-example}, the more {the two data points of positions $p$ and $-p$} restrain $F_D(X)$, the closer the outer approximation is to the \ac{FH-RoA} of the real function.

\begin{figure}[htbp]
    \centering
    \begin{subfigure}{.49\linewidth}
        \includegraphics[width=\linewidth]{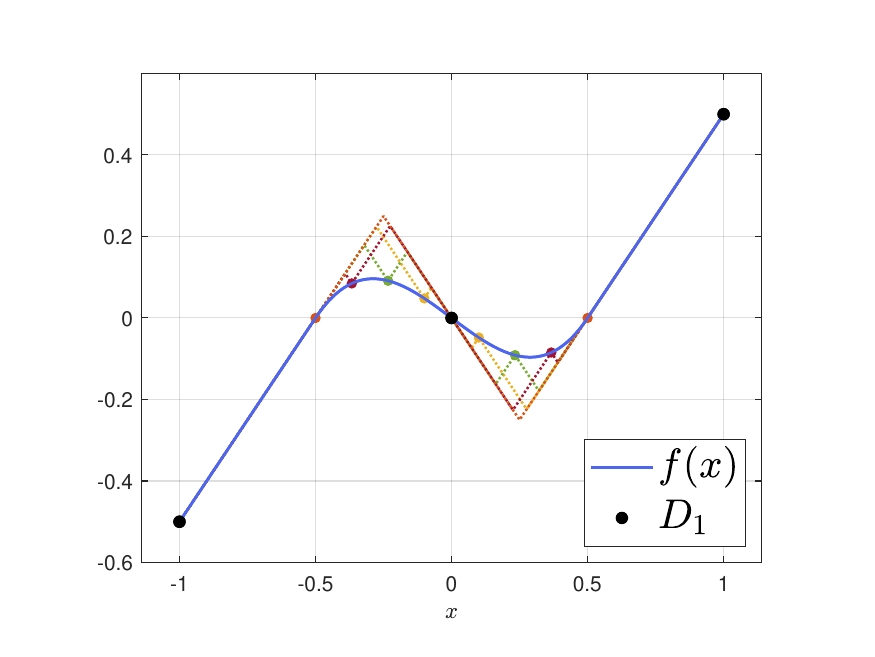}
        \caption{\centering all variations of~$D(p)$ and~their corresponding $f_{best}$}
        \label{fig:1D-variation-data}
    \end{subfigure}
    \begin{subfigure}{.49\linewidth}
        \includegraphics[width=\linewidth]{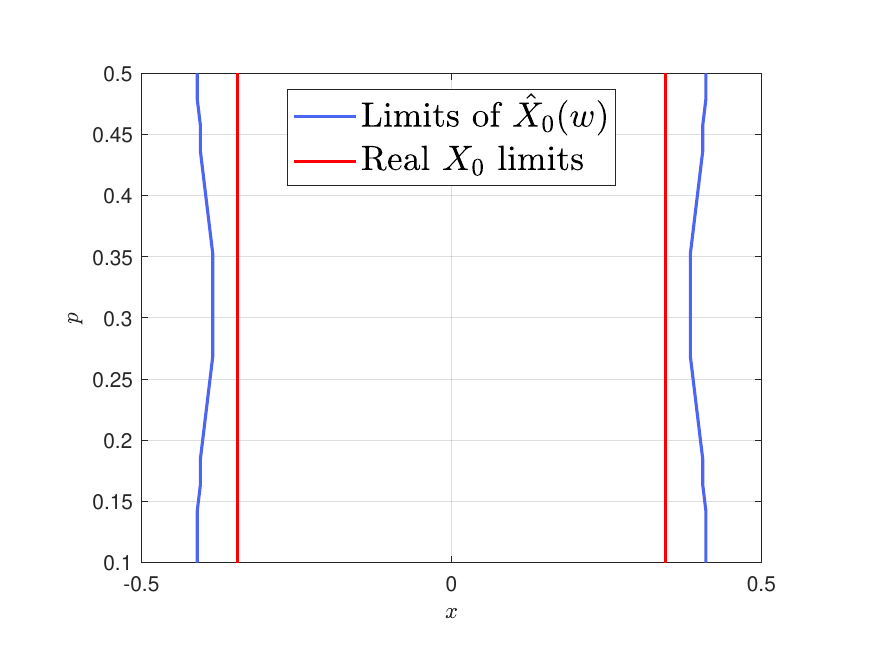}
        \caption{\centering variation of $\hat{X}_0(w)$ according to~$D(p)$}
        \label{fig:1D-variation-RoA}
    \end{subfigure}
    \caption{Simulation using the variable dataset $D(p)$}
    \label{fig:1D-variation}
\end{figure}

\subsection{Numerical example in $\R^2$}
The SoS reformulations of the \ac{LP}~\eqref{eq:LP} is solved for a dynamical system with a Lipschitz constant $M=1$ defined on $x \in X=[-0.8,0.8]^2\subset\R^2$ as follows:
\begin{equation}
    \dot{x} = 
        \begin{cases}
        2x(\|x\|^2-0.5^2) \quad \text{if} \quad \|x\| \leq 0.5 \\[0pt]
        x\left(1-\frac{1}{2\|x\|}\right) \quad \text{else}.
        \end{cases}
\end{equation}

We seek to find the \ac{FH-RoA} of the system with $x \in X_T=\{x\in X \ | \ (0.25^2-\|x\|^2) \geq 0\}$ at  time $T=1s$. For this, we fix the degree of the polynomials to $10$, and the dataset is generated randomly in $X$ with $N = 50$. See the results in Figure~\ref{fig:2D-full}. 
\begin{figure}[htbp]
    \centering
    \begin{subfigure}{.49\linewidth}
        \includegraphics[width=\linewidth]{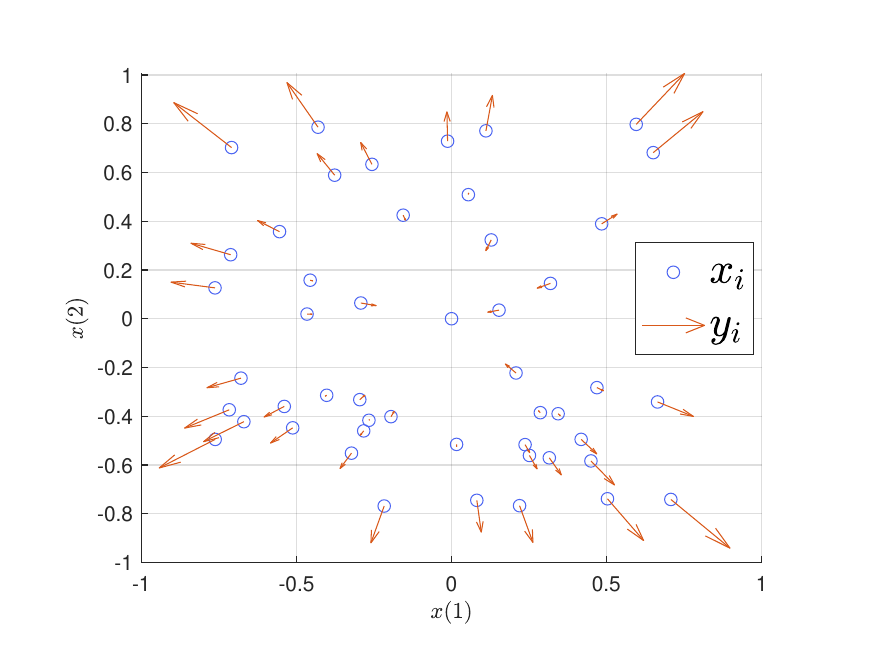}
        \caption{generated dataset}
        \label{fig:2D-full-data}
    \end{subfigure}
    \begin{subfigure}{.49\linewidth}
        \includegraphics[width=\linewidth]{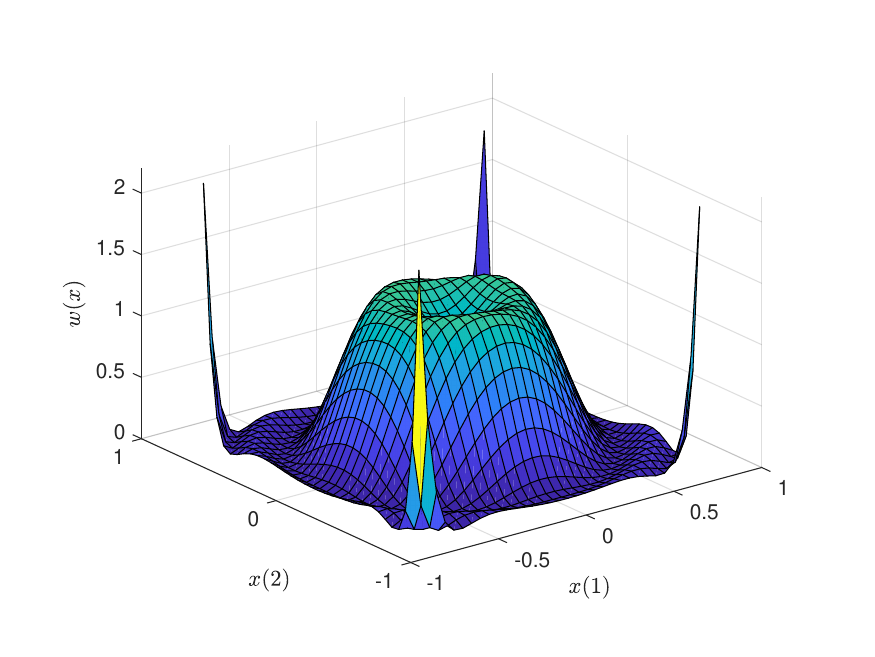}
        \caption{polynomial $w(x)$}
        \label{fig:2D-full-w}
    \end{subfigure}
    \begin{subfigure}{.49\linewidth}
        \includegraphics[width=\linewidth]{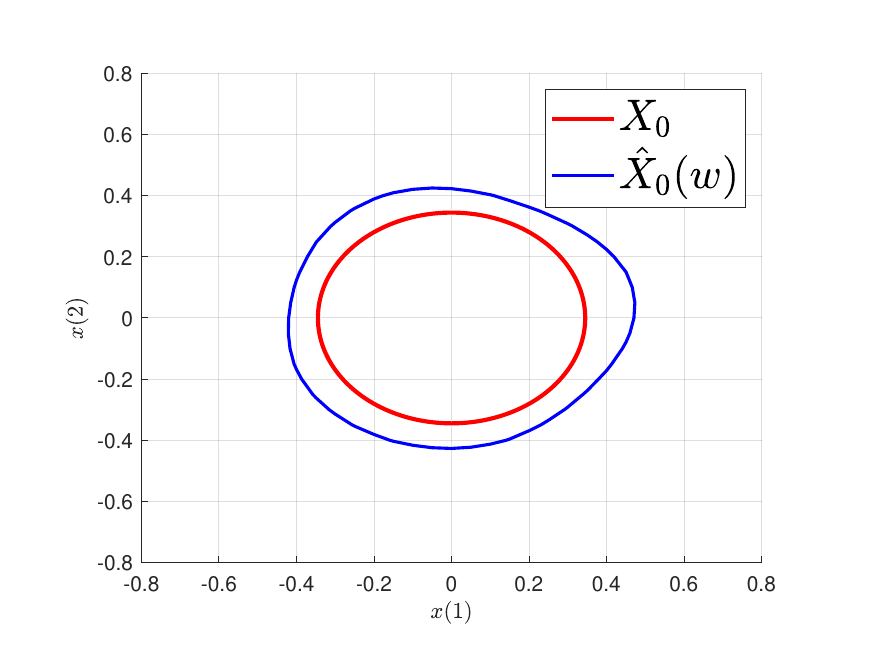}
        \caption{resulting \ac{FH-RoA}}
        \label{fig:2D-full-RoA}
    \end{subfigure}
    \caption{Simulation results for $X_T$}
    \label{fig:2D-full}
\end{figure}

In \ref{fig:2D-full-RoA}, the obtained approximation $\hat{X}_0(w)$ in blue approaches the actual \ac{FH-RoA} of the system in red. The number of data points and the chosen degree of the polynomials influence the algorithm's performance, the approximation's accuracy, and can lead to a better approximation of the true function \ac{FH-RoA}. The dataset shown in Figure~\ref{fig:2D-full-data} influences the shape of the obtained $\hat{X}_0(w)$ through its effect on the derived best-case function. By providing the \ac{LP} with a different dataset, we can generate a new $\hat{X}_0(w)$ that addresses the shape disfigurement of the previous result without increasing the number of data points, as illustrated in figure~\ref{fig:2D-full-V2}.

\begin{figure}[htbp]
    \centering
    \begin{subfigure}{.49\linewidth}
        \includegraphics[width=\linewidth]{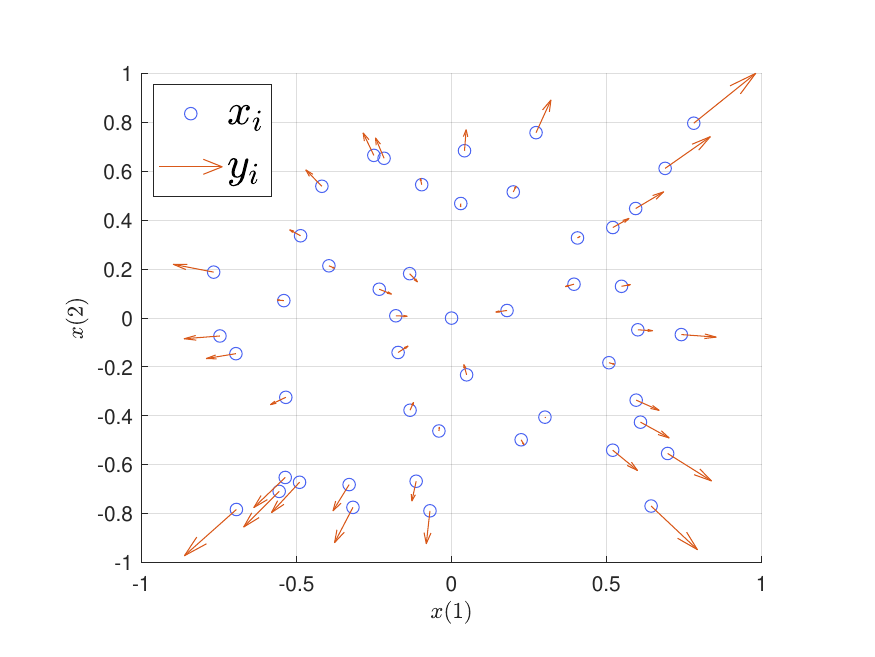}
        \caption{new generated dataset}
        \label{fig:2D-full-data-V2}
    \end{subfigure}
    \begin{subfigure}{.49\linewidth}
        \includegraphics[width=\linewidth]{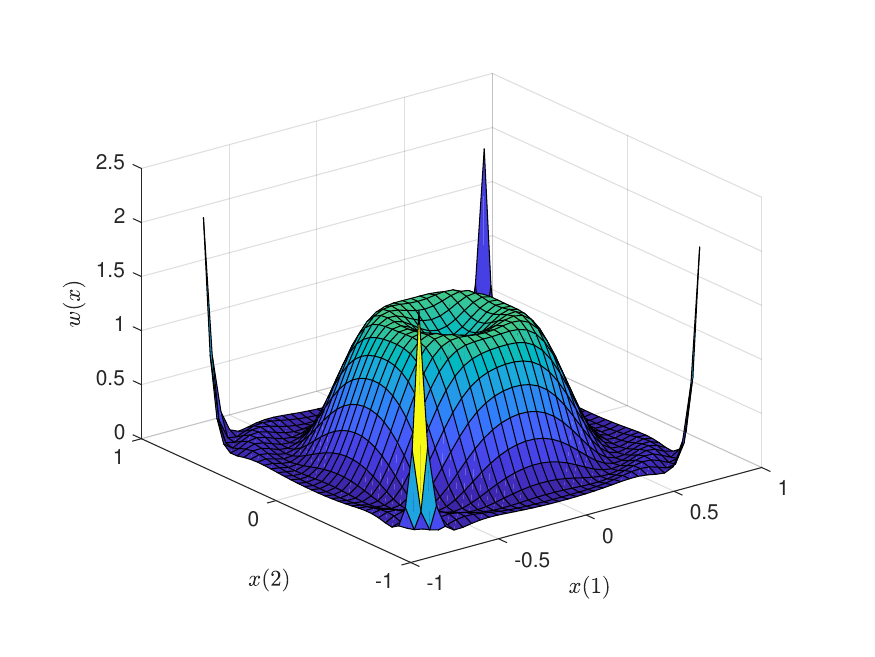}
        \caption{polynomial $w(x)$}
        \label{fig:2D-full-w-V2}
    \end{subfigure}
    \begin{subfigure}{.49\linewidth}
        \includegraphics[width=\linewidth]{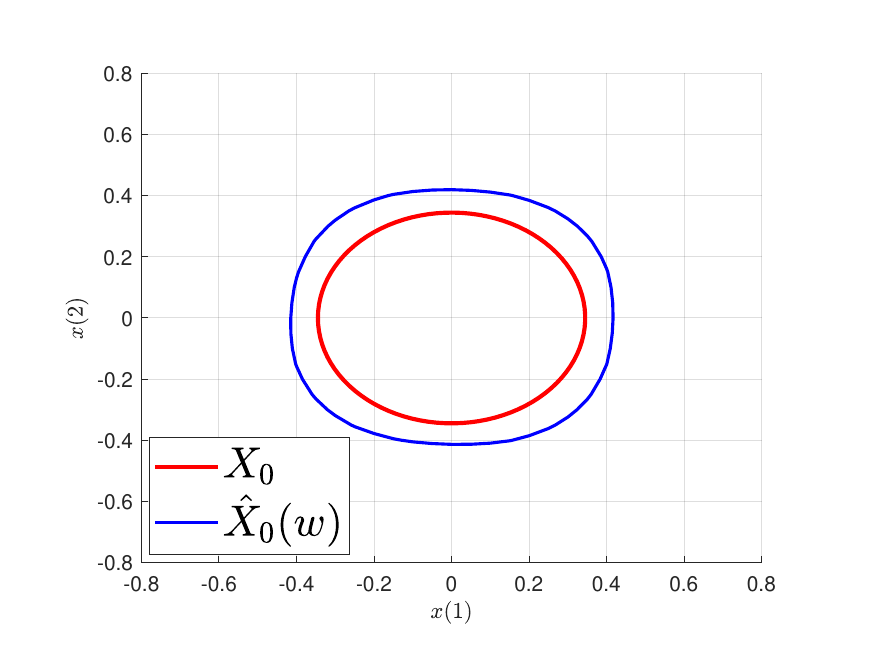}
        \caption{resulting \ac{FH-RoA}}
        \label{fig:2D-full-RoA-V2}
    \end{subfigure}
    \caption{Alternative simulation results for $X_T$}
    \label{fig:2D-full-V2}
\end{figure}

To test the validity of the algorithm for non-symmetric \ac{FH-RoA}s, the previous $X_T$ is changed into a non-symmetric set for the same time $T=1s$: 
\begin{equation*}
    X_T'= \left\{x\in X \  \middle | \ \begin{array}{c}
    (0.25^2-x^2) \geq 0 \\[0pt] x_{(1)}+x_{(2)} \geq 0 
    \end{array} \right\}
\end{equation*}

The simulation results from such a change are shown in Figure~\ref {fig:2D-half}. The shape of the approximating set $\hat{X}_0$ in Figure~\ref{fig:2D-half-RoA} changes to accommodate the new target set $X_T'$ so that it still approaches the real \ac{FH-RoA}.
\begin{figure}[htbp]
    \centering
    \begin{subfigure}{.49\linewidth}
        \includegraphics[width=\linewidth]{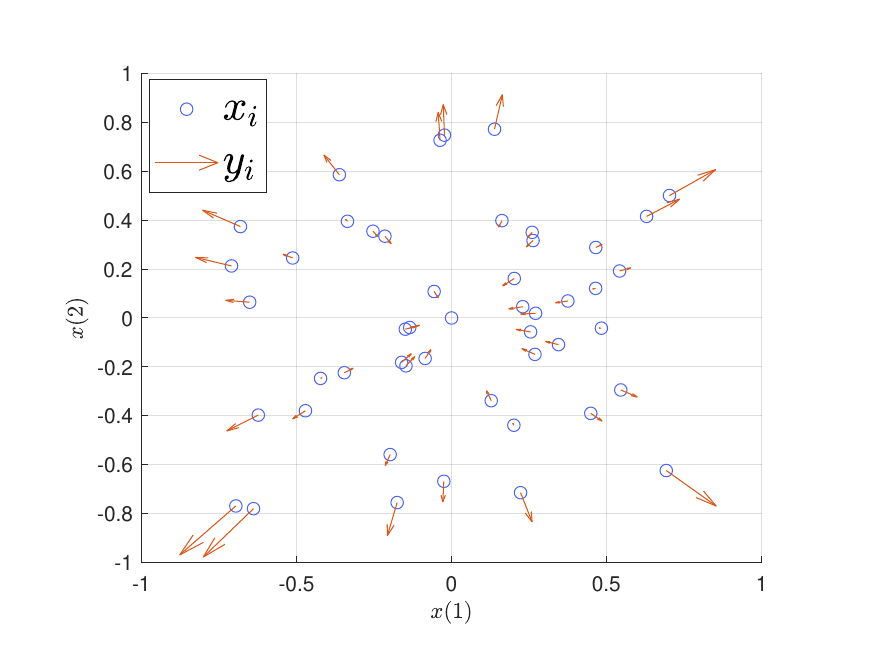}
        \caption{generated dataset}
        \label{fig:2D-half-data}
    \end{subfigure}
    \begin{subfigure}{.49\linewidth}
        \includegraphics[width=\linewidth]{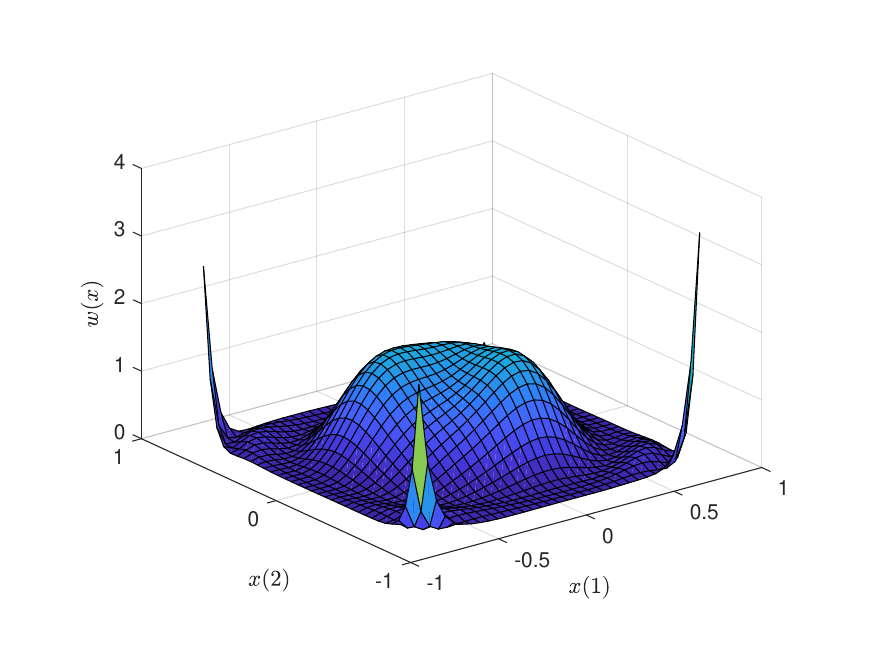}
        \caption{polynomial $w(x)$}
        \label{fig:2D-half-w}
    \end{subfigure}
    \begin{subfigure}{.51\linewidth}
        \includegraphics[width=\linewidth]{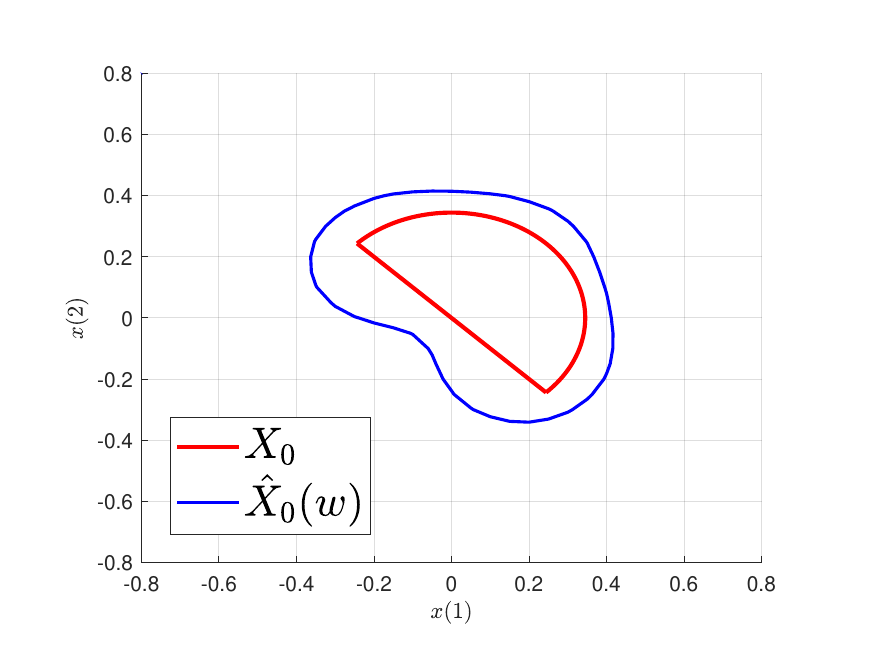}
        \caption{resulting \ac{FH-RoA}}
        \label{fig:2D-half-RoA}
    \end{subfigure}
    \caption{Simulation results for $X_T'$}
    \label{fig:2D-half}
\end{figure}

\subsection{Computational challenges and proposed resolution}
This proposed approach offers several advantages, including that $f$ does not need to be polynomial or known like it is supposed in \cite{KordaTAC14,KordaNOLCOS13}. It is also proven to converge, as shown in Appendix~\ref{app:proof}. It faces, however, some numerical problems when the value of $M$ is high or when $T\gg 1s$. In the first case, a high overestimate of $M$ is not informative enough for the \ac{LP} to approach the \ac{FH-RoA} 
{(see the influence of $M$ on the convergence proof in Appendix~\ref{app:proof}). When $T \gg 1s$, even model-based approaches are hampered by numerical conditioning issues in the SDP approximations: this stems from the geometric properties of Putinar's Positivstellensatz as detailed in~\cite{schlosser2026rates}; this is worse in the data-based setting where model approximation errors accumulate along the time horizon. 
} 

In future work, we aim to address this issue by introducing a partitioning strategy for the set $X$, inspired by the approach in \cite{CibulkaLCSS21,TacchiTAC25}. This method is expected to enhance the results by introducing local Lipschitz constants and data constraints for each cell in the set. It would also allow for a lower degree of polynomials with no significant drawbacks to accuracy. 

Another potential improvement to the results can be achieved by changing the polynomial basis of the \ac{SoS} formulation. An \ac{SoS} decomposition expresses a polynomial as a quadratic form in a chosen polynomial basis with a semi-definite positive weight matrix, which can be checked using \ac{LMI}. The polynomial basis influences the accuracy of the numerical approximation as shown in~\cite{henrion2009approximate}. {This idea has been used in~\cite{AAA2017DSOS} to drastically improve the scalability of \ac{SoS} programming, which is known to be its main practical weakness.} 

\section{Conclusion}

This paper introduces a novel data-driven approach to approximate the \ac{FH-RoA} via the moment-SoS hierarchy. This methodology effectively provides outer approximations of the real \ac{FH-RoA} using a dataset and a Lipschitz constant of the unknown system dynamics. In contrast to model-based approaches, using data circumvents the limitation of the system's polynomiality. While the results are promising, further improvements are needed to address the scalability issue arising from numerical challenges. Future work includes introducing set partitioning to address the scalability issue, further developing the worst-case inner approximation part, investigating the influence of data on conservatism, and testing a polynomial basis for the \ac{SoS} formulation other than monomials.

\bibliographystyle{ieeetr}
\bibliography{RefIFAC26}

@book{aubin2011viability,
    author={Aubin, Jean-Pierre and Bayen, Alexandre M and Saint-Pierre, Patrick},
    title={Viability theory: new directions},
    publisher={Springer Science},
    year={2011}
}

@article{KordaTAC14,
    author={Henrion, Didier and Korda, Milan},
    title={Convex computation of the region of attraction of polynomial control systems},
    journal={IEEE Trans. Autom. Control},
    year={2014}
}

@article{CibulkaLCSS21,
    author={Cibulka, Vit and Korda, Milan and Hanis, Tomas},
    title={Spatio-temporal decomposition of sum-of-squares programs for the Region of attraction and reachability},
    journal={IEEE L-CSS},
    year={2021}
}

@conference{KordaNOLCOS13,
    author={Korda, Milan and Henrion, Didier and Jones, Colin},
    booktitle={{IFAC} NOLCOS},
    title={Inner approximations of the region of attraction for polynomial dynamical systems},
    year={2013}
}

@book{Lasserre10,
    author={Lasserre, Jean B.},
    title={Moments, positive polynomials and their applications},
    publisher={Imperial College Press},
    year={2010}
}

@article{Putinar93,
    author={Putinar, Mihai},
    title={Positive polynomials on compact semi-algebraic sets},
    journal={Indiana University Mathematics Journal},
    year={1993}
}

@article{OustryLCSS19,
    author={Oustry, Antoine and Tacchi, Matteo and Henrion, Didier},
    title={Inner approximations of the maximal positively invariant set for polynomial dynamical systems},
    journal={IEEE L-CSS},
    year={2019}
}

@article{KordaSIOPT14,
    author={Korda, Milan and Henrion, Didier and Jones, Colin},
    title={Convex computation of the maximum controlled invariant set for polynomial control systems},
    journal={SIAM J. Control Optim.},
    year={2014}
}

@article{TacchiTAC25,
    author={Tacchi, Matteo and Lian, Yingzhao and Jones, Colin},
    title={Robustly learning regions of attraction from fixed data},
    journal={IEEE Transactions on Automatic Control},
    year={2025}
}

@article{julian-parametrization-1999,
	author={Julian, Pedro and Guivant, Jose and Desages, Alfredo},
	title={A parametrization of piecewise linear {L}yapunov functions via linear programming},
	volume={72},
	journal={Int. J. Control},
	month={Jan},
	year={1999},
	pages={702--715}
}

@inproceedings{kim2024estimation,
  title={Estimation of constraint admissible invariant set with neural {L}yapunov function},
  author={Kim, Dabin and Kim, H Jin},
  booktitle={IEEE 63rd CDC},
  pages={5032--5039},
  year={2024}
}

@article{kolter2019learning,
    author={Kolter, J Zico and Manek, Gaurav},
    title={Learning stable deep dynamics models},
    journal={NeurIPS},
    volume={32},
    year={2019}
}

@inproceedings{min2023data,
    author={Min, Youngjae and Richards, Spencer M and Azizan, Navid},
    title={Data-driven control with inherent {L}yapunov stability},
    booktitle={IEEE 62nd CDC},
    pages={6032--6037},
    year={2023}
}

@article{grune2021overcoming,
    author={Gr{\"u}ne, Lars},
    title={Overcoming the curse of dimensionality for approximating {L}yapunov functions with deep neural networks under a small-gain condition},
    journal={IFAC-PapersOnLine},
    year={2021},
    volume={54},
    number={9},
    pages={317--322},
    publisher={Elsevier}
}

@article{hafstein2016computing,
    author={Hafstein, Sigurdur F and Kellett, Christopher M and Li, Huijuan},
    title={Computing continuous and piecewise affine {L}yapunov functions for nonlinear systems},
    journal={J. Comput. Dyn.},
    volume={2},
    number={2},
    pages={227--246},
    year={2016},
    publisher={Journal of Computational Dynamics}
}

@article{martin-data-driven-2024,
	author={Martin, Tim and Allgöwer, Frank},
	title={Data-driven system analysis of nonlinear systems using polynomial approximation},
	volume={69},
	number={7},
	journal={IEEE Trans. Autom. Control},
	month={Jul},
	year={2024},
	pages={4261--4274}
}

@article{tan2004searching,
    author={Tan, Weehong and Packard, Andrew},
    title={Searching for control {L}yapunov functions using {SoS} programming},
    journal={sibi},
    volume={1},
    number={1},
    year={2004}
}

@article{biswas2023region,
    author={Biswas, Bhaskar and Ignatyev, Dmitry and Zolotas, Argyrios and Tsourdos, Antonios},
    title={Region of attraction estimation using union theorem in sum-of-squares optimization},
    journal={arXiv:2305.11655},
    year={2023}
}

@book{george2025comparison,
    author={Bitsoris},
    title={\hspace*{-0.5em}Comparison \hspace*{-0.3em} methods \hspace*{-0.3em} in \hspace*{-0.3em} control},
    publisher={\hspace*{-0.5em} Springer},
    year={2025}
}

@inproceedings{yalmip,
    author={L{\"{o}}fberg, J.},
    title={YALMIP : A Toolbox for Modeling and Optimization in MATLAB},
    booktitle={CACSD Conf},
    year={2004}
}

@manual{mosek,
    author={{MOSEK ApS}},
    title={The MOSEK optimization toolbox for MATLAB manual. Version 10.1.},
    year={2024}
 }

@article{henrion2009approximate,
    author={Henrion, Didier and Lasserre, Jean B and Savorgnan, Carlo},
    title={Approximate volume and integration for basic semialgebraic sets},
    journal={SIAM review},
    volume={51},
    number={4},
    pages={722--743},
    year={2009}
}

@article{lyapunovGeneralProblemStability1992,
    author={Lyapunov, Aleksandr Mikhailovich},
    title={The General Problem of the Stability of Motion},
    year={1992},
    journal={Int. J. Control},
    volume={55},
    number={3},
    pages={531--534}
}

@book{la2012stability,
    author={La Salle, Joseph and Lefschetz, Solomon},
    title={Stability by Liapunov's direct method with applications by Joseph L Salle and Solomon Lefschetz},
    volume={4},
    publisher={Elsevier},
    year={2012}
}

@inproceedings{colbert2018using,
    author={Colbert, Brendon K and Peet, Matthew M},
    title={Using trajectory measurements to estimate the region of attraction of nonlinear systems},
    booktitle={IEEE 57th CDC},
    pages={2341--2347},
    year={2018}
}

@inproceedings{han2022sum,
    author={Han, Dongkun and Huang, Hejun},
    title={Sum-of-squares Program and Safe Learning On Maximizing the Region of Attraction of Partially Unknown Systems},
    booktitle={IEEE ASCC},
    pages={875--881},
    year={2022}
}

@article{makdesi2021efficient,
  title={Efficient data-driven abstraction of monotone systems with disturbances},
  author={Makdesi, Anas and Girard, Antoine and Fribourg, Laurent},
  journal={IFAC-PapersOnLine},
  volume={54},
  number={5},
  pages={49--54},
  year={2021}
}

@article{maddalena2021deterministic,
  title={Deterministic error bounds for kernel-based learning techniques under bounded noise},
  author={Maddalena, Emilio Tanowe and Scharnhorst, Paul and Jones, Colin N},
  journal={Automatica},
  volume={134},
  year={2021}
}

@article{schlosser2026rates,
    author = {Schlosser, Corbinian and Tacchi-Bénard, Matteo and Lazarev, Alexey},
    title = {Convergence rates for the moment-SoS hierarchy},
    journal = {Numerical Algebra, Control and Optimization},
    volume = {16},
    pages = {105--156},
    year = {2026}
}

@incollection{AAA2017DSOS,
    author = {Ahmadi, Amir Ali and Hall, Georgina},
    title = {Sum of {S}quares Basis Pursuit with Linear and Second Order Cone Programming},
    booktitle = {Algebraic and Geometric Methods in Discrete Mathematics},
    publisher = {American Mathematical Society},
    year = {2017}
}

\appendix

\section{The sums-of-squares hierarchy} \label{app:sos}

Introducing the cone 
$$\Sigma[z] = \{p_1(z)^2+\ldots+p_K(z)^2 \mid K \in \N, p_1,\ldots,p_K \in \R[z]\}$$
of polynomial \ac{SoS} in the variable $z \in \{x,(t,x,y)\}$, the univariate polynomial $q(t) = (T-t)t$, the size $N$ vector
$$ \gamma(x,y) = (M^2\|x-x_i\|^2 - \|y-y_i\|^2)_{i=1}^N $$
and, for any vector of polynomials $g(z) \in \R[z]^p$ in the variable $z$, the quadratic module defined as
\begin{equation}
    \cQ(g(z)) = \left\{\sigma(z) + s(z)^\top g(z) \middle | \begin{array}{l}
        \sigma(z) \in \Sigma[z], \\[0pt] s(z) \in \Sigma[z]^p
    \end{array} \right\},
\end{equation}
the \ac{LP}~\eqref{eq:LP} can be recast into the following:

\begin{subequations} \label{eq:SoS}
\begin{align}
    \underset{\substack{v \in \R[t,x] \\[0pt] w \in \R[x]}}{\mathrm{minimise}} & \int w(x) \; dx \\[0pt]
    \text{s.t.} \qquad & \hspace{-2em} -\cL_y v(t,x,y) \in \cQ\left(\vphantom\sum q(t), g_X(x), \gamma(x,y)\right)  \label{sos:occupation} \\[0pt]
    & w(x) \in \cQ(g_X(x)) \\[0pt]
    & w(x) - v(0,x) - 1 \in \cQ(g_X(x)) \\[0pt]
    & v(T,x) \in \cQ(g_T(x)) 
\end{align}
\end{subequations}

\begin{proposition}
    Problem~\eqref{eq:SoS} is equivalent to problem~\eqref{eq:LP} in the sense that any $(v,w)$ feasible for~\eqref{eq:SoS} is feasible for~\eqref{eq:LP} and minimising sequences $(v_\epsilon,w_\epsilon)_\epsilon$ for~\eqref{eq:LP} can be approximated by minimising sequences $(\hat{v}_\epsilon,\hat{w}_\epsilon)_\epsilon$ for~\eqref{eq:SoS}.
\end{proposition}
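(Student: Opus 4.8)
\textit{(Proof proposal.)} The plan is to establish the two asserted directions separately: a \emph{soundness} direction, that every pair feasible for~\eqref{eq:SoS} is feasible for~\eqref{eq:LP} with the same cost, and a \emph{completeness} direction, that the optimal value of~\eqref{eq:LP} is attained in the limit by pairs feasible for~\eqref{eq:SoS}. Together these identify the two infimal values and match minimizing sequences. The soundness direction is elementary and rests only on the defining nonnegativity property of the quadratic module; the completeness direction is the substantial one and hinges on Putinar's Positivstellensatz, so I first record the compactness that makes it applicable.

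For soundness, let $(v,w)$ be feasible for~\eqref{eq:SoS}. Every element of a quadratic module $\cQ(g(z))$ has the form $\sigma(z)+s(z)^\top g(z)$ with $\sigma$ and each component of $s$ a sum of squares, hence nonnegative everywhere; on the set $\{g(z)\ge 0\}$ the term $s(z)^\top g(z)$ is a sum of products of nonnegative quantities and is therefore nonnegative, so every member of $\cQ(g(z))$ is nonnegative on $\{g\ge 0\}$. Applying this to~\eqref{sos:occupation} on the set cut out by $(T-t)t\ge 0$, $g_X(x)\ge 0$ and $\gamma(x,y)\ge 0$ — which is exactly $[0,T]\times\Gamma$ by~\eqref{eq:uncertainty} — yields $\cL v(t,x,y)\le 0$ there, that is,~\eqref{con:occupation}. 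The remaining three membership constraints give $w-v(0,\cdot)-1\ge 0$ and $w\ge 0$ on $X$, and $v(T,\cdot)\ge 0$ on $X_T$. Since polynomials are $C^1$ (resp. $C^0$), $(v,w)$ is feasible for~\eqref{eq:LP} with identical objective, so the value of~\eqref{eq:SoS} is at least that of~\eqref{eq:LP}.

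For completeness, fix an LP-feasible pair $(v,w)$. I would first make all four inequalities strict by a cheap perturbation: replacing $v$ by $v+\alpha(T-t)+\beta$ turns the Lie derivative into $\cL v-\alpha$, since $\cL[(T-t)]=-1$ and constants are annihilated by $\cL$, so for $\alpha>0$ the occupation constraint becomes strict; the added constant $\beta>0$ makes $v(T,\cdot)\ge\beta>0$ strict on $X_T$; and bumping $w$ by $\alpha T+\beta+\eta$ restores $w-v(0,\cdot)-1\ge\eta>0$ and $w\ge\alpha T+\beta+\eta>0$ strictly on $X$. The objective grows only by $(\alpha T+\beta+\eta)\,\mathrm{vol}(X)$, which vanishes as $\alpha,\beta,\eta\to 0$. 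Next I would approximate this strictly feasible pair in the $C^1([0,T]\times X)$ norm (for $v$) and the $C^0(X)$ norm (for $w$) by polynomials, using density of polynomials on compact sets; taking the error small relative to the strict margins preserves all four strict inequalities for the polynomial pair $(\hat v,\hat w)$. The expressions $-\cL\hat v$, $\hat w-\hat v(0,\cdot)-1$, $\hat v(T,\cdot)$ and $\hat w$ are then strictly positive polynomials on $[0,T]\times\Gamma$, $X$, $X_T$ and $X$. These sets are compact — $[0,T]$ and $X$ by Assumption~\ref{asm:algebra}, and $\Gamma$ because $x\in X$ bounded forces $\|y-y_i\|\le M\|x-x_i\|$ to be bounded — so, after adjoining a redundant ball inequality if needed, each quadratic module is Archimedean and Putinar's Positivstellensatz places each strictly positive polynomial in the corresponding module. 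Thus $(\hat v,\hat w)$ is feasible for~\eqref{eq:SoS} with cost within $O(\alpha+\beta+\eta+\text{error})$ of $\int w$, and letting the parameters tend to zero converts any LP minimizing sequence into an SoS minimizing sequence.

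The main obstacle lies in this completeness step, and within it two coupled technical points: controlling the Lie-derivative term $\cL\hat v$, which forces the approximation of $v$ to be in the $C^1$ norm rather than merely uniform, and verifying the Archimedean condition for the occupation module generated by $(T-t)t$, $g_X$ and $\gamma$. The perturbation above is chosen precisely so that a \emph{single} modification of $(v,w)$ renders all four constraints simultaneously strict with explicit margins, decoupling the otherwise interlinked conditions on $v(0,\cdot)$, $v(T,\cdot)$ and $w$; once a uniform strict margin is secured, Putinar's theorem does the rest and the cost gap is controlled by the approximation error together with the perturbation size.
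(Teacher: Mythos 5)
Your proposal is correct and follows essentially the same route as the paper's proof: nonnegativity of quadratic-module elements on $\{g\geq 0\}$ for the soundness direction, then strict feasibility, Weierstrass polynomial approximation, and Putinar's Positivstellensatz for the completeness direction. In fact, you make explicit two points the paper leaves implicit --- the perturbation $v \mapsto v + \alpha(T-t) + \beta$, $w \mapsto w + \alpha T + \beta + \eta$ showing that strictly feasible minimizing sequences exist, and the Archimedean requirement on the quadratic module over $[0,T]\times\Gamma$ (which compactness of $X$ and $X_T$ alone does not automatically supply) --- so your write-up is, if anything, more careful than the paper's.
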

\begin{proof}
    Since by design any $\sigma(z) \in \Sigma[z]$ is non-negative, it is clear that any $q(z) \in \cQ(g(z))$ is non-negative on $\{z \mid g(z) \geq 0\}$. Hence, for $\text{x} \in \{\text{b, c, d, e}\}$, (\ref{eq:SoS}x) implies (\ref{eq:LP}x) and feasibility for~\eqref{eq:SoS} implies feasibility for~\eqref{eq:LP}.

    Next, let $(v_\epsilon,w_\epsilon)$ be a minimising sequence for~\eqref{eq:LP} that is strictly feasible (meaning that all the inequality constraints are strict). Since by Assumption~\ref{asm: algebra} $X$ is compact, the Weierstrass theorem yields a polynomial approximation $(\hat{v}_\epsilon, \hat{w}_\epsilon)$ of $(v_\epsilon, w_\epsilon)$ that is also a strictly feasible minimising sequence for~\eqref{eq:LP}. Eventually, we state Putinar's Positivstellensatz~\cite{Putinar93}: under mild assumptions (that are implied by Assumption~\ref{asm: algebra} in our case), if $q(z) > 0$ on $\{z \mid g(z) \geq 0\}$, then $q(z) \in \cQ(g(z))$. Then, since the $(\hat{v}_\epsilon, \hat{w}_\epsilon)$ are strictly feasible for~\eqref{eq:LP}, they are also feasible for~\eqref{eq:SoS}. Since~\eqref{eq:SoS} is a strengthening of~\eqref{eq:LP}, any minimising sequence of~\eqref{eq:LP} that is also feasible for~\eqref{eq:SoS} is a minimising sequence of~\eqref{eq:SoS}.
\end{proof}

The quadratic module $\cQ(g(z))$ can be truncated into a bounded degree quadratic module $\cQ_d(g(z))$ where $\sigma(z)$ and each term $s_j(z)\cdot g_j(z), j \in [p]$ of $s(z)^\top g(z)$ have degree at most $d \in \N$. Such a truncated quadratic module has the particularity of being \ac{LMI}-representable (see e.g. Prop. 2.1 in ~\cite{Lasserre10}). Then, the \ac{SoS} hierarchy consists of the sequence of problems obtained by replacing $\cQ$ with $\cQ_d, d \in \N$ in~\eqref{eq:SoS}; since the $\cQ_d$ are nested (in the sense that $\cQ_d(g(z)) \subset \cQ_{d+1}(g(z)) \subset\cQ(g(z))$), this produces a monotonic sequence of convex, finite-dimensional strengthenings of~\eqref{eq:SoS}.

Problem~\eqref{eq:SoS} has a linear cost and any of its finite-dimensional truncations has a non-empty, compact feasible set, so they all have an optimal solution. Moreover, since $\cQ(g(z)) = \cup_{d\in\N} \cQ_d(g(z))$, these optimal solutions form a minimising sequence for~\eqref{eq:SoS} and hence for~\eqref{eq:LP}, yielding a converging sequence of outer approximations of $X_0$ as described in~\eqref{eq:cv} when the degree bound $d$ goes to infinity.

The main drawback of approximating~\eqref{eq:LP} with truncated instances of~\eqref{eq:SoS} is the computational burden: at degree $d$, the involved \ac{LMI} have size growing like $\binom{2n+d+1}{d}$, i.e. combinatorially in the state dimension as well as in the degree of the involved polynomials. Furthermore, the \ac{LMI} representation of $\Sigma[z]$ heavily depends on the choice of the polynomial basis for $\R[z]$, which can have significant effects on the numerical behaviour of the resulting \ac{SDP} problems.

\section{Convergence of data-based regions of attraction}\label{app:proof}

{The proof of Theorem~\ref{thm:convergence} is as follows:}

\begin{proof}
\eqref{eq:best-case} holds by definition of the \textit{worst-case} RoA $X_0^\star(D)$: any of its elements only initialises trajectories that reach the target under dynamics compatible with the dataset $D$. By design, $f$ is compatible with $D$, so that its trajectories initialized in $X_0^\star(D)$ indeed reach the target, i.e. $X_0^\star(D) \subset X_0(f)$. Then, we prove~\eqref{eq:dataconv}. Both $X_0(f)$ and $X_0^\star(D)$ are respectively approximated by inside by:
$$\check{X}_0(w_0) = \{x \in X \mid w_0(x) \leq 1\},$$ and: $$\check{X}_0(w_D) = \{x \in X \mid w_D(x) \leq 1\},$$
with $w_D$ feasible for the data-based \ac{LP} in~\eqref{eq:innerLP}, and $w_0$ feasible for the model-based \ac{LP} as described in~\cite{KordaTAC14}:
\begin{equation} \label{eq:model}
\begin{array}{clr}
    \mathrm{minimise} & \displaystyle \int_X w_0(x) \; dx & \\[0pt]
    \mathrm{s.t.} 
    & \cL_{f}v_0(t,x) \leq 0 & \forall (t,x) \in [0,T]\times X \\[0pt]
    & w_0(x) \geq 0 \hspace{-2em} & \forall x \in X \\[0pt]
    & w_0(x) \geq v_0(0,x) + 1 \hspace{-2em} & \forall x \in X \\[0pt]
    & v_0(T,x) \geq 0 & \forall x \in X \setminus X_T \\[0pt]
    & v_0(t,x) \geq 0 & \forall (t,x) \in [0,T] \times \partial X 
\end{array}
\end{equation}
where $\cL_{f}$ denotes the operator: $$v(t,x) \longmapsto \partial_t v(t,x) + f(x)^\top \partial_x v(t,x).$$ 
A feasibility-preserving direction for \eqref{eq:model} is defined by $v_\delta(t,x) = T-t$ and $w_\delta(x) = T \geq 0$, with:
\begin{align*}
    w_\delta(x) - v_\delta(0,x) = 0  & \qquad \forall x \in X \\[0pt]
    v_\delta(T,x) = 0 & \qquad \forall x \in X \setminus X_T \\[0pt]
    v_\delta(t,x) \geq T-T = 0 & \qquad \forall (t,x) \in [0,T] \times \partial X \\[0pt]
    \cL_{f}v_\delta(t,x) = -1 & \qquad \forall (t,x) \in [0,T]\times X
\end{align*}

Moreover, the following holds:
$$ \int_X w_\delta(x) \; dx = T \mathrm{vol}(X) \triangleq
C. $$
From~\cite{KordaNOLCOS13}, we know that, for $\varepsilon > 0$, there exists an $\nicefrac{\varepsilon}{2}$-optimal feasible solution $(v_\varepsilon, w_\varepsilon)$ for~\eqref{eq:model} in the sense that: 
$$ \check{X}_0(w_\varepsilon) \subset X_0(f), \quad \text{and}$$
\begin{align*}
\mathrm{vol}\left(X_0(f) \! \setminus \! \check{X}_0(w_\varepsilon)\right) & \leq \! \int_X \! w_\varepsilon(x) dx - \mathrm{vol}\left(X \! \setminus \! X_0(f)\right) 
\leq \frac{\varepsilon}{2}.
\end{align*}

Let $\eta = \nicefrac{\varepsilon}{2C} > 0$, and define:
$$ v_\eta(t,x) = v_\varepsilon(t,x) + \eta \cdot v_\delta(t,x),$$ 
$$ w_\eta(x) = w_\varepsilon(x) + \eta \cdot w_\delta(x) \geq T\cdot \eta > 0. $$
We check that, on the appropriate domains, it holds:
\begin{align*}
    w_\eta(x) - v_\eta(0,x) & = \underset{\geq 1}{\underbrace{w_\varepsilon(x) - v_\varepsilon(0,x)}} \\[0pt] & \hspace{-1em} + \underset{= 0}{\underbrace{(w_\delta(x) - v_\delta(0,x))\cdot\eta}} \geq 1 \\[0pt]
    v_\eta(T,x) & = \underset{\geq 0}{\underbrace{v_\varepsilon(T,x)}} + \underset{=0}{\underbrace{\eta \cdot v_\delta(T,x)}} \geq 0 \\[0pt]
    v_\eta(t,x) & = \underset{\geq 0}{\underbrace{v_\varepsilon(t,x)}} + \underset{\geq 0}{\underbrace{\eta \cdot v_\delta(t,x)}} \geq 0 \\[0pt]
    \cL_{f} v_\eta(t,x) & = \underset{\leq 0}{\underbrace{\cL_{f}v_\varepsilon(t,x)}} + \underset{= -\eta }{\underbrace{\cL_{f}v_\delta(t,x))\cdot\eta}} 
    \leq - \eta,
\end{align*}
so that $(v_\eta,w_\eta)$ is feasible for \eqref{eq:model}.
We are going to derive a condition on the dataset $D$ for $(v_\eta,w_\eta)$ to be feasible for the data-based problem~\eqref{eq:innerLP}. Notice that by construction, constraints~\eqref{con:wpos}--\eqref{con:boundary} are satisfied by $(v_\eta,w_\eta)$. Hence we look at constraint~\eqref{con:occupin}. Let $t \in [0,T]$, $(x,y) \in \Gamma_D$. We just need to prove:
$$ y^\top \partial_xv_\eta(t,x) \leq f(x)^\top\partial_xv_\eta(t,x) + \eta, $$
since we already know that the RHS is not greater than $-\partial_tv_\eta(t,x)$. Hence, we look for a condition such that:
$$ (y-f(x))^\top \partial_xv_\eta(t,x) \leq \eta. $$
Using the Cauchy-Schwarz inequality, we get:
$$(y-f(x))^\top \partial_xv_\eta(t,x) \leq \|y-f(x))\| \cdot \|\partial_xv_\eta(t,x)\|.$$
Moreover, for any $i \in [N]$, it holds:
$$\|y-f(x)\| \leq \underset{\leq M \|x-x_i\|}{\underbrace{\|y - f(x_i)\|}} \hspace{-.2em} + \hspace{-.2em} \underset{\leq M\|x-x_i\|}{\underbrace{\|f(x_i) - f(x)\|}} \leq 2M \|x-x_i\|$$
where the first braced inequality comes from the definition of $\Gamma_D \ni (x,y)$ and the second comes from the fact that $M$ upper bounds the Lipschitz constant of $f$. Hence, defining:
$$ \rho = \frac{\eta}{2M \|\partial_xv_\eta\|_\infty}, $$
we have just proved that if for all $x \in X$, there exists an $i \in [N]$ with $\|x-x_i\| \leq \rho$, then:
\begin{align*}
    \cL_y v_\eta(t,x) & = \underset{\leq -\eta}{\underbrace{\cL_{f}v_\eta(t,x)}} + (y-f(x))^\top \partial_x v_\eta(t,x) \\[0pt]
    & \leq \|y-f(x)\| \cdot \|\partial_xv_\eta(t,x)\| - \eta \\[0pt]
    & \leq 2M \|x-x_i\| \cdot \|\partial_xv_\eta\|_\infty - \eta \\[0pt]
    & \leq 2M \rho \|\partial_xv_\eta\|_\infty - \eta = 0.
\end{align*}
Since $X$ is compact, it can be covered by a finite number of balls of radius $\rho$. Hence, sampling $f$ at the centres of these balls gives an appropriate dataset $D = \{(x_i, f(x_i))\}_{i=1}^N$ such that for all $x \in X$ there is an index $i \in [N]$ with $\|x-x_i\| \leq \rho$. For this specific dataset, we have proved that $(v_\eta, w_\eta)$ is feasible for both problems~\eqref{eq:model} and \eqref{eq:innerLP}. 
$$ \text{Hence, it holds} \quad \check{X}_0(w_\eta) \subset X_0^\star(D) \subset X_0(f), \quad \text{and} $$
\begin{align*}
    \mathrm{vol}(X_0(f)) \setminus \check{X}_0(w_\eta)) & \leq \int_X w_\eta(x) \; dx - \mathrm{vol}(X \setminus X_0(f)) \\[0pt]
    & = \underset{\leq \nicefrac{\varepsilon}{2}}{\underbrace{\int_X w_\varepsilon(x) \; dx - \mathrm{vol}(X \setminus X_0(f))}} \\[0pt]
    & + \eta \underset{=C}{\underbrace{\int_X w_\delta(x) \; dx}} 
    \leq \nicefrac{\varepsilon}{2} + C\cdot \eta = \varepsilon
\end{align*}
so that $X_0(f)$ is approximated arbitrarily closely (from inside) by a feasible function of the data-based problem~\eqref{eq:innerLP}, with no knowledge of the full dynamics.
\end{proof}

\begin{remark}
    What we actually proved is that, up to sufficiently informative datasets $D_N$ and high relaxation orders in the hierarchy (see Appendix~\ref{app:sos}), one can approximate the true \ac{FH-RoA} from inside arbitrarily closely, by solving \ac{SDP} problems. 
    Although less meaningful, the result also stands for best-case outer approximation~\eqref{eq:dataROA} with:
    $$ \mathrm{vol}\left(X_0(D_N) \setminus X_0(f) \right) \underset{N\to\infty}{\longrightarrow} 0, $$
    and the proof is identical to the above.
\end{remark}

{
\begin{remark}
    The above proof highlights how crucial the validity and quality of the Lipschitz upper bound $M$ is. If $M$ underestimates the Lipschitz constant, then $f(x)$ can take values outside of $F_D(x)$, hence hampering the upper bound on $\|y - f(x)\|$ in the proof. If $M$ is very large, then the radius $\rho$ in the proof becomes very small, hence asking for a much larger dataset to obtain the appropriate covering of $X$. In short, the larger $M$, the more conservative the \ac{FH-RoA} approximation. Methods are proposed to estimate and validate $M$ from data in~\cite{martin-data-driven-2024,makdesi2021efficient}. In this paper, we only assume that $M$ is known in advance.
\end{remark}
}

\end{document}